\documentclass[aps,twocolumn,showpacs,amsmath,amssymb]{revtex4}
\usepackage{amsthm}
\usepackage{latexsym}
\usepackage{amsfonts}
\usepackage{bbm,dsfont}
\usepackage{graphicx}

\newtheorem{lemma}{Lemma}

\newtheorem{theorem}{Theorem}
\newtheorem{corollary}{Corollary}
\newtheorem{proposition}{Proposition}

\newcommand{\C}{\mathbb{C}}

\newcommand{\R}{\mathbb{R}}

\newcommand{\h}[1]{\mathcal{#1}}
\newcommand{\hil}{\mathcal{H}}

\newcommand{\sfq}{\mathsf{Q}}%position spectral measure
%momentum spectral measure
\newcommand{\p}{\mathsf{p}}
 %norm
 
\newcommand{\E}{\mathsf{E}}

\newcommand{\tr}[1]{\mathrm{tr}\left[ {#1} \right]} 
\newcommand{\de}{{\rm d}}
\DeclareMathOperator{\rank}{rank}

\begin{document}
\title{Informationally complete sets of Gaussian measurements}

\author{Jukka Kiukas}
\address{School of Mathematical Sciences, University of Nottingham, University Park,
Nottingham, NG7 2RD, UK}
\email{jukka.kiukas@nottingham.ac.uk}

\author{Jussi Schultz}
\address{Dipartimento di Matematica, Politecnico di Milano, Piazza Leonardo da Vinci 32, I-20133 Milano, Italy}
\email{jussi.schultz@gmail.com}

\begin{abstract}
We prove necessary and sufficient conditions for the informational completeness of an arbitrary set of Gaussian observables on continuous variable systems with finite number of degrees of freedom. In particular, we show that an informationally complete set either contains a single informationally complete observable, or includes infinitely many observables. We show that for a single informationally complete observable, the minimal outcome space is the phase space, and the observable can always be obtained from the quantum optical $Q$-function by linear postprocessing and Gaussian convolution, in a suitable symplectic coordinatization of the phase space. In the case of projection valued Gaussian observables, e.g., generalized field quadratures, we show that an informationally complete set of observables is necessarily infinite. Finally, we generalize the treatment to the case where the measurement coupling is given by a general linear bosonic channel, and characterize informational completeness for an arbitrary set of the associated observables. 
\end{abstract}

\pacs{03.65.Wj, 03.65.Ta}
\maketitle

%%%%%%%%%%
\section{Introduction}
%%%%%%%%%%
The ability to determine an unknown quantum state produced by some source is central for many applications in quantum information science. The procedure of reconstructing the quantum state, known as quantum state tomography, has therefore been under intense investigations and continues to attract a lot of attention \cite{Paris, Lvovsky2009}. In the continuous variable regime, and in particular its quantum optical realizations, there are two commonly used approaches to quantum tomography. In optical homodyne tomography, the set of rotated quadratures is measured using balanced homodyne detection, thus allowing one to ''scan''  the phase space of the system \cite{Vogel1989, Smithey1993}. The alternative method uses the Husimi $Q$-function which can be measured using a double homodyne detection scheme, and has the advantage that the reconstruction requires the measurement of only a single observable \cite{Leonhardt}.

Both of the above instances fall under the class of Gaussian measurements, i.e., measurements which yield a  Gaussian measurement outcome distribution whenever the system is initially in a Gaussian state \cite{Weedbrook2012}. The purpose of this paper is to present a method 
 for investigating whether or not a given set of such Gaussian observables is informationally complete \cite{Prugovecki1977}, i.e.,  allows the reconstruction of an unknown quantum state of the system from the statistics. We consider an $N$-mode electromagnetic field, whose phase space is therefore $2N$-dimensional. We show that by measuring a Gaussian observable one obtains the values of the Weyl transform of the state on a linear subspace of the phase space. Therefore, for a set of observables the union of these subspaces needs to be "sufficiently large" in order for unique state determination to be possible. In particular, we show that if one does not have access to a single informationally complete Gaussian observable, then one necessarily needs infinitely many observables.
 
 After these general results we focus on two specific instances. Firstly, we investigate single informationally complete Gaussian observables in  more detail. We show that if we restrict to the smallest possible dimension of the outcome space, then the set of informationally complete Gaussian observables is exhausted, up to linear transformations of the measurement outcomes, by Gaussian observables which are covariant with respect to phase space translations. Furthermore, we show that in a suitable symplectic coordinatization of the phase space, any informationally complete Gaussian observable with a minimal outcome space is a postprocessing of the $Q$-function. Secondly, we study commutative Gaussian observables which then include projection valued (also called sharp) Gaussian observables as special cases. We  show that no finite set of such observables is informationally complete. For an arbitrary set of  generalized field quadratures, i.e., sharp Gaussian observables with one-dimensional outcome space, we prove a further characterization for informational completeness.  We also find an interesting connection between the generalized quadratures and informationally complete Gaussian phase space observables.  Finally, we consider a more general scenario where the measurement coupling is represented by a general linear bosonic channel. Also in the general case we obtain a characterization for informational completeness and deal explicitly with general covariant phase space observables.

%%%%%%%%%%
\section{Preliminaries}
%%%%%%%%%%
The Hilbert space of an electromagnetic field consisting of $N$ bosonic modes is the $N$-fold tensor product $\hil^{\otimes N} = \bigotimes_{k=1}^N \hil_k$, where each single mode Hilbert space is spanned by the number states $\{\vert n\rangle \mid n =0,1,2,\ldots \}$. The creation and annihilation operators related to the $j^{\rm th}$ mode are denoted by $a^*_j$ and $a_j$. In the coordinate representation $\hil_k \simeq L^2  (\R)$ where the number states are represented by the Hermite functions. The Hilbert space of the entire field is then $\hil^{\otimes N} \simeq L^2 (\R^N)$.

The states of the field are represented by positive trace one operators acting on $\hil^{\otimes N}$, and the observables are represented by positive operator valued measures (POVMs) defined on a $\sigma$-algebra of  subsets of some measurement outcome set. In this paper, we will only consider observables taking values in $\R^M$. Each observable is thus  represented by a map $\E:\h B(\R^M)\to\h L(\hil^{\otimes N})$, where $\h B(\R^M)$ is the Borel $\sigma$-algebra of $\R^M$ and $\h L(\hil^{\otimes N})$ denotes the set of bounded operators on $\hil^{\otimes N}$, and which satisfies positivity $\E(X)\geq 0$, normalization $\E(\R^M) =I$ and $\sigma$-additivity $\E(\cup_j X_j)=\sum_j \E(X_j)$ for any sequence of pairwise disjoint sets where the series converges in the weak operator topology. When a measurement of $\E$ is performed on the system initially  prepared in a state $\rho$, the measurement outcomes are distributed according to the probability measure $\p^\E_\rho (X) = \tr{\rho\E(X)}$.

The phase space of the $N$-mode EM-field is $\R^{2N}$ and we use the notation ${\bf x} =  (q_1,p_1,\ldots,q_N,p_N)^T$ for the canonical coordinates. The phase space translations are represented in $\hil^{\otimes N}$ by the Weyl operators $$W({\bf x}) = e^{-i{\bf x}^T{\bf \Omega} {\bf R} }$$ where the matrix
$$
{\bf \Omega} = \bigoplus_{j=1}^N \omega_j ,\qquad \omega_j = \left(\begin{array}{cc}   0 & 1 \\ -1 & 0 \end{array}\right)
$$
determines the symplectic form $({\bf x}, {\bf y})\mapsto {\bf x}^T{\bf\Omega} {\bf y}$ and ${\bf R} = (Q_1,P_1,\ldots,Q_N,P_N)^T$, with $Q_j=\frac{1}{\sqrt{2}} (a^*_j+a_j)$ and $P_j=  \frac{i}{\sqrt{2}} (a^*_j-a_j)$ being the canonical quadrature (i.e., position and momentum) operators acting on the $j^{\rm th}$ mode. Whenever appropriate, we will emphasize the number of modes in question with a subscript so that, e.g., in the above case we would have ${\bf \Omega}_N$. Due to the Stone-von Neumann theorem, the Weyl operators are determined by the relation 
\begin{equation}\label{eqn:weyl_relation}
W({\bf x})W({\bf y}) = e^{-\frac{i}{2}{\bf x}^T {\bf \Omega} {\bf y}}W({\bf x+y})
\end{equation}
up to unitary equivalence. This unitary freedom corresponds to the choice of the canonical coordinates: let ${\rm Sp} (2N)$ denote the group of symplectic transformations of $\R^{2N}$, i.e., real invertible $2N\times 2N$-matrices ${\bf S}$ satisfying ${\bf S}^T {\bf \Omega S} = {\bf \Omega}$. If we change the coordinates as ${\bf x}\mapsto {\bf \tilde x} ={\bf S}^{-1}{\bf x}$,
then the original Weyl operators in the new coordinates are given by $W_{\bf S}({\bf \tilde x})=W({\bf x})=W({\bf S\tilde x})$. Since $\bf S$ is symplectic, the operators $W_{\bf S}({\bf \tilde x})$ indeed also satisfy the Weyl relation \eqref{eqn:weyl_relation}, and there exists a unitary operator $U({\bf S})$ such that 
\begin{equation}\label{eqn:weyl_symplectic}
W({\bf S\tilde x}) = U({\bf S}) W({\bf \tilde x}) U({\bf S})^*
\end{equation}
for all ${\bf \tilde x}\in\R^{2N}$. Summarizing, a change of canonical coordinates is effected by the transformation
\begin{align}\label{eqn:coordinate_transformation}
{\bf x}&\mapsto {\bf S}^{-1}{\bf x}, & A&\mapsto U({\bf S}) AU({\bf S})^*,
\end{align}
where ${\bf S}\in {\rm Sp} (2N)$.

A set of observables $\{\E_j \mid j\in\h I\}$ is called informationally complete \cite{Prugovecki1977} if an arbitrary unknown state $\rho$ is uniquely determined by the collective measurement outcome statistics $\{ \p_\rho^{\E_j}\mid j\in\h I\} $, or equivalently, by the corresponding Fourier transforms 
$$
\widehat{\p}_\rho^{\E_j} ({\bf y}) = \int e^{i{\bf y}^T{\bf x}} \, \p_\rho^{\E_j} (\de{\bf x}).
$$
For our purposes, the crucial fact is that any state $\rho$ is uniquely determined by its Weyl transform $$\widehat{\rho}\,({\bf x}) = \tr{\rho W({\bf x})},$$ which in our case turns out to be directly related to the Fourier transforms of the outcome distributions. The Weyl transform is a continuous function, so that $\rho$ is already determined if we know the values of $\widehat{\rho}$ on a dense subset of $\R^{2N}$. The following lemma will then turn out to play a central role. It expresses precisely the intuitive idea that in order to uniquely determine the state, one has to "scan" the phase space sufficiently well. Here $\h T(\hil^{\otimes N})$ refers to the set of trace class operators on $\hil^{\otimes N}$.

\begin{lemma}\label{lemma:open_set}
Let $U\subset \R^{2N}$ be a nonempty open set. Then there exists a nonzero $S\in\h T(\hil^{\otimes N})$ such that $\widehat{S}({\bf x}) = 0$ for all ${\bf x}\notin U$.
\end{lemma}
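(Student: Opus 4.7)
The plan is to construct $S$ directly by smearing translates of the vacuum projection $P_0 = |0\rangle\langle 0|$ (where $|0\rangle$ denotes the $N$-mode vacuum) over phase space with an appropriately chosen weight. The key fact driving this approach is that the Weyl transform of $P_0$ is the strictly positive Gaussian $\widehat{P_0}({\bf y}) = e^{-|{\bf y}|^2/4}$, and that conjugation by $W({\bf x})$ multiplies this Weyl transform by the character ${\bf y}\mapsto e^{i{\bf x}^T{\bf \Omega}{\bf y}}$. Integrating such translates against a weight $g$ will therefore produce a symplectic Fourier transform of $g$ times this Gaussian, giving essentially free choice over the profile of $\widehat{S}$ once the Gaussian damping is accounted for.

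Concretely, I would first pick a nonzero test function $h\in C_c^\infty(\R^{2N})$ with $\supp h\subset U$, possible because $U$ is open and nonempty, and let $g$ denote the inverse symplectic Fourier transform of $h$. Then $g$ is Schwartz and hence lies in $L^1(\R^{2N})$, so I may define
\[
S = \int_{\R^{2N}} g({\bf x})\, W({\bf x})P_0 W({\bf x})^*\, \de{\bf x}
\]
as a Bochner integral in the trace norm; convergence is immediate from $\|W({\bf x})P_0 W({\bf x})^*\|_1 = 1$ and $g\in L^1$, yielding $S\in\h T(\hil^{\otimes N})$ with $\|S\|_1\leq \|g\|_1$.

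To compute $\widehat{S}$, I would use \eqref{eqn:weyl_relation} twice to verify the conjugation identity $W({\bf x})^*W({\bf y})W({\bf x}) = e^{i{\bf x}^T{\bf \Omega}{\bf y}}W({\bf y})$, and then combine cyclicity of the trace with the Gaussian Weyl transform of the vacuum to obtain
\[
\widehat{S}({\bf y}) = e^{-|{\bf y}|^2/4}\int g({\bf x})\, e^{i{\bf x}^T{\bf \Omega}{\bf y}}\, \de{\bf x} = e^{-|{\bf y}|^2/4}\, h({\bf y}).
\]
This is supported in $\supp h\subset U$ and is not identically zero, since $h\not\equiv 0$ and the Gaussian factor is everywhere positive; in particular $S\neq 0$, as required.

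The main subtlety is the phase-factor bookkeeping in deriving the conjugation identity, together with matching conventions for the symplectic Fourier transform so that the transform of $g$ reproduces $h$ without spurious constants. Modulo these routine checks, the argument reduces to a short computation with coherent-state translates of the vacuum; the conceptual content of the lemma is precisely that arbitrarily regular (smooth, compactly supported) functions on phase space are indeed realised as Weyl transforms of trace class operators.
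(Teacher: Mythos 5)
Your construction is correct and is essentially the paper's own argument: the operator $S=\int g({\bf x})\,W({\bf x})P_0W({\bf x})^*\,\de{\bf x}$ is exactly the paper's convolution $f*S_0$ with $S_0$ taken to be the vacuum, and in both cases the point is that $\widehat{S}({\bf y})$ equals the strictly positive Weyl transform of the vacuum times the symplectic Fourier transform of the weight, which one chooses to be a nonzero smooth bump supported in $U$. Your phase bookkeeping ($W({\bf x})^*W({\bf y})W({\bf x})=e^{i{\bf x}^T{\bf\Omega}{\bf y}}W({\bf y})$) and the Schwartz/$L^1$ and trace-norm Bochner integral checks are all sound, so nothing is missing.
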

\begin{proof}
Let $S_0\in\h T(\hil^{\otimes N})$ be such that $\widehat{S_0}({\bf x}) \neq 0$ for all ${\bf x}\in\R^{2N}$ (e.g., the vacuum state). For any $f\in L^1 (\R^{2N})$ define the operator-valued convolution $f*S_0\in\h T(\hil^{\otimes N})$ \cite{Werner1984} via
$$
f*S_0  = \int f({\bf x}) W({\bf x}) S_0 W({\bf x})^* \, \de{\bf x}
$$
so that $\widehat{f*S_0}({\bf x}) = \widehat{f}({\bf \Omega}{\bf x})\,  \widehat{S_0}({\bf x}) = 0$ if and only if $\widehat{f}({\bf \Omega}{\bf x})=0$. We can now choose $\widehat{f}$ to be a nonzero compactly supported $C^\infty$-function such that $\widehat{f}({\bf \Omega}{\bf x})=0$ for all ${\bf x}\notin U$. We then obtain $f$ via the inverse Fourier transform, and our choice of $\widehat{f}$ guarantees that $f\in L^1 (\R^{2N})$. Thus, $S=f*S_0$ is of the required type. 
\end{proof}

%%%%%%%%%%
\section{Gaussian observables and their postprocessings}
%%%%%%%%%%

We will next recall the relevant concepts of Gaussian states, channels, and observables, as well as the concept of postprocessing of a measurement. For more details on continuous variable quantum information, we refer the readers to  \cite{Weedbrook2012, Braunstein2005} and   \cite[Ch. V]{Holevo}.

A state $\rho$ is called a Gaussian state, if the Weyl transform $\widehat{\rho}$ is of  a Gaussian form. More specifically, the Weyl transform of a Gaussian state is given by
\begin{equation*}
\widehat{\rho} \, ({\bf x})  = e^{-\frac{1}{4}{\bf x}^T ({\bf \Omega}^T{\bf V  \Omega} ){\bf x} - i ({\bf \Omega m})^T{\bf x}}
\end{equation*}
where ${\bf m}$ is the displacement vector whose components are given by the first moments $m_j = \tr{\rho R_j}$ and ${\bf V}$ is the covariance matrix whose elements are $V_{ij} =   \tr{\rho \{ R_i-m_i,R_j-m_j \}}$ where $\{\cdot,\cdot\}$ denotes the anticommutator. The covariance matrix is a real symmetric matrix which satisfies the uncertainty relation \cite{Simon1994}
\begin{equation*}
{\bf V} + i{\bf \Omega} \geq {\bf 0}.
\end{equation*}

A Gaussian channel is a completely positive trace preserving map $\Phi:\h T(\h H^{\otimes N}) \to \h T(\h H^{\otimes M}) $ which maps Gaussian states into Gaussian states. Written in terms of the Weyl transforms of the input and output states, any Gaussian channel can be characterized by
\begin{equation}\label{eqn:gaussian_weyl_transform}
\widehat{\Phi(\rho)} ({\bf x}) = \widehat{\rho}\,  ({\bf Ax}) e^{-\frac{1}{4}{\bf x}^T {\bf B x} - i {\bf v}^T{\bf x}}
\end{equation}
where ${\bf A}$ is a $2N\times 2M$-matrix and ${\bf B}$ is a $2M\times 2M$-matrix which must satisfy the  complete positivity condition 
\begin{equation}\label{eqn:gaussian_complete_positivity}
{\bf B} + i{\bf \Omega}_M - i{\bf A}^T {\bf \Omega}_N {\bf A} \geq {\bf 0}
\end{equation}
and ${\bf v}\in\R^{2M}$ is a fixed vector \cite{Demoen1979, Holevo2001, Holevo}. Equivalently, we may use the dual channel $\Phi^*:\h L(\h \hil^{\otimes M})\to \h L(\hil^{\otimes N})$, whose Gaussian character is captured in its action on the Weyl operators 
\begin{equation*}
\Phi^*(W ({\bf x})) = W ({\bf Ax}) e^{-\frac{1}{4}{\bf x}^T {\bf B x} - i {\bf v}^T{\bf x}}.
\end{equation*}

We say that an observable $\E:\h B(\R^M)\to\h L(\hil^{\otimes N})$ is Gaussian, if the measurement outcome distribution is Gaussian whenever the system is initially in a Gaussian state. This can be conveniently expressed in terms of the characteristic function of the POVM 
$$
\widehat{\E}({\bf p}) = \int e^{i{\bf p }^T{\bf x}} \, \E(\de{\bf x}),
$$
in which case $\E$ is Gaussian whenever 
\begin{equation}\label{eqn:gaussian_characteristic_function}
\widehat{\E}({\bf p})  =  W({\bf A}_0 {\bf p}) e^{-\frac{1}{4} {\bf p}^T {\bf B}_0 {\bf p} - i {\bf v}_0^T {\bf p}}
\end{equation}
where ${\bf A}_0$ is a $2N\times M$-matrix and ${\bf B}_0$ is  a $M\times M$-matrix which satisfy the positive definiteness condition 
\begin{equation}\label{eqn:gaussian_povm_condition}
{\bf B}_0 - i {\bf A}^T_0 {\bf \Omega}_N {\bf A}_0 \geq {\bf 0}
\end{equation}
and ${\bf v}_0\in\R^M$.  Note that our definition differs from the one used, e.g., in \cite{Holevo} where the outcome space of a Gaussian observable is assumed to be a symplectic space, i.e., even dimensional. The physical motivation for this definition will become apparent in the next section where measurement schemes realizing Gaussian observables are discussed.

A postprocessing of a measurement is a fixed transformation performed on the measurement outcome distribution. This is relevant for informational completeness, because information can only get lost in such a process; if a postprocessing is informationally complete, then so is the original measurement. Here we consider two different kinds of postprocessings: linear postprocessings and smearings. To this end, let  $\E:\h B(\R^M)\to\h L(\hil^{\otimes N})$ be a Gaussian observable parametrized by $({\bf A}_0,{\bf B}_0,{\bf v}_0)$.

Given a $M'\times M$ matrix ${\bf P}$  we define the observable $\E_{\bf P}:\h B(\R^{M'})\to\h L(\hil^{\otimes N})$ via
$$\E_{\bf P}(X)=\E({\bf P}^{-1} (X)),$$
where ${\bf P}^{-1}(X)=\{{\bf y}\in \R^M\mid {\bf P}{\bf y} \in X \}$. Then $\E_{\bf P}$ is clearly Gaussian, and it is called a linear postprocessing of $\E$. If ${\bf P}$ is invertible, $\E_{{\bf P}}$ is called a bijective linear postprocessing of $\E$. A straightforward computation shows that the triple of parameters characterizing $\E_{\bf P}$ is $({\bf A}_0{\bf P}^T,{\bf P}{\bf B}_0{\bf P}^T,{\bf P}{\bf v}_0 )$. It is clear from the definition that a bijective linear postprocessing of $\E$ is informationally complete if and only if $\E$ is. 

For any probability measure $\mu:\h B(\R^M)\to[0,1]$, we define the observable $\mu*\E:\h B(\R^M)\to \h L(\hil^{\otimes N})$ via 
$$(\mu*\E)(X) = \int \mu(X-{\bf x})\, \E(\de{\bf x}),$$
and say that $\mu*\E$ is a smearing of $\E$. Note that for any state $\rho$ the corresponding probability distribution is just a convolution of the original one with the measure $\mu$, i.e., ${\bf p}^{\mu*\E}_\rho = \mu * {\bf p}^\E_\rho$. If $\mu$ is Gaussian, i.e, $\widehat{\mu}({\bf p}) = e^{-\frac{1}{4}{\bf p}^T {\bf Cp}- i{\bf d}^T{\bf p}}$ with ${\bf C}\geq {\bf 0}$, then the smeared observable is also Gaussian and, using the fact that $\widehat{\mu*\E}=\widehat{\mu}\, \widehat{\E}$, we find that  the parameters of the smeared observable are $({\bf A}_0, {\bf B}_0+{\bf C}, {\bf v}_0 + {\bf d})$. Since $\widehat{\mu}$ is nonzero everywhere, the smearing $\mu*\E$ is informationally complete if and only if $\E$ is.
 
%%%%%%%%%%
\section{Unitary dilations of Gaussian observables}\label{sec:dilation}
%%%%%%%%%%
We will next construct a unitary measurement dilation for an arbitrary Gaussian observable. This is done by first showing that any Gaussian observable can be measured by applying  a Gaussian channel to the field and then performing homodyne detection on the output of the channel. Since unitary dilations of Gaussian channels are known \cite{Caruso2008}, this then allows us to construct the desired measurement dilation.

Suppose first that we have a Gaussian channel $\Phi:\h T(\hil^{\otimes N}) \to \h T(\hil^{\otimes M})$ determined by the parameters $({\bf A}, {\bf B}, {\bf v})$. Let $\sfq:\h B(\R^M)\to\h L(\hil^{\otimes M})$ be the canonical spectral measure, i.e., $\sfq (X)$ corresponds to multiplication by the indicator function of the set $X$, and define the observable $\E: \h B(\R^M)\to\h L(\hil^{\otimes N})$ as $\E (X)  = \Phi^*(\sfq(X))$. The characteristic function of $\E$ is then $ \widehat{\E} ({\bf p})  = \Phi^* \left(e^{i{\bf p}^T{\bf Q}}\right)$ where ${\bf Q} = (Q_1,\ldots,Q_M)^T$. For any ${\bf p}\in\R^M$ we denote  ${\bf p}_0 = (0,p_1,\ldots,0,p_M)^T\in\R^{2M}$ so that $e^{i{\bf p}^T{\bf Q}} = W({\bf p}_0)$, and we obtain
\begin{equation*}
\widehat{\E} ({\bf p}) =  W ({\bf Ap}_0) e^{-\frac{1}{4}{\bf p}_0^T {\bf B p}_0 - i {\bf v}^T{\bf p}_0}
\end{equation*} 
by Eq.~\eqref{eqn:gaussian_weyl_transform}. Finally, we define the $2N\times M$-matrix ${\bf A}_0$, the  $M\times M$-matrix ${\bf B}_0$ and the vector ${\bf v}_0\in\R^M$ by 
\begin{equation}\label{eqn:povm_matrix}
({\bf A}_0)_{ij} = {\bf A}_{i,2j},\quad ({\bf B}_0)_{ij} = {\bf B}_{2i,2j},\quad ({\bf v}_0 )_{i} = {\bf v}_{2i}
\end{equation}
so that the characteristic function can be expressed as
\begin{equation*}
\widehat{\E}  ({\bf p}) =  W  ({\bf A}_0{\bf p}) e^{-\frac{1}{4}{\bf p}^T {\bf B}_0 {\bf p} - i {\bf v}^T_0{\bf p}}.
\end{equation*}
In other words, the observable is Gaussian.

Conversely, suppose that we have a Gaussian POVM $\E$ determined by the parameters $({\bf A}_0, {\bf B}_0, {\bf v}_0)$. We need to find parameters $({\bf A}, {\bf B}, {\bf v})$ of a Gaussian channel such that Eqs.~\eqref{eqn:povm_matrix} hold. To this end, first define  the matrix ${\bf B}'$ by setting  ${\bf B}'_{2i,2j} = ({\bf B}_0)_{ij} $ and ${\bf B}_{ij}'=0$ otherwise, and then set ${\bf B}= {\bf B}' - i{\bf \Omega}_M$ so that  ${\bf B}_{2i,2j}= ({\bf B}_0)_{ij} $. Define also the matrix ${\bf A}$ via ${\bf A}_{i,2j} = ({\bf A}_0)_{ij}$ and ${\bf A}_{ij}=0$ otherwise,  and the vector ${\bf v}$ in a similar manner as ${\bf v}_{2i}=({\bf v_0})_i$ and ${\bf v}_i=0$ otherwise. In order to prove the validity of the complete positivity condition \eqref{eqn:gaussian_complete_positivity}, we note that by our construction this reduces to showing that ${\bf B}' - i{\bf A}^T {\bf \Omega}_N {\bf A} \geq {\bf 0}$ which is an immediate concequence of \eqref{eqn:gaussian_povm_condition} and the definitions of the matrices in question.

In order to reach the desired unitary dilation, we recall that any Gaussian channel can be realized by coupling the $N$-mode system to $L$ auxiliary Gaussian modes, and then applying two types of unitary operators to the total system: symplectic transformations $U({\bf S})$ with ${\bf S}\in {\rm Sp} (2(N+L))$, and displacements $W({\bf d})$ with ${\bf d}\in \R^{2(N+L)}$ \cite{Caruso2008} (see Fig.~\ref{fig:dilation}). The number of auxiliary modes can always be chosen so that $L\leq 2\max\{N,M \}$, though the optimal choice depends on the details of the channel  \cite{Caruso2011}.  The rest of this section is devoted to demonstrating how the parameters of the Gaussian channel, and hence the corresponding observable, are determined by the dilation of the channel.

\begin{figure}
\includegraphics[width=9cm]{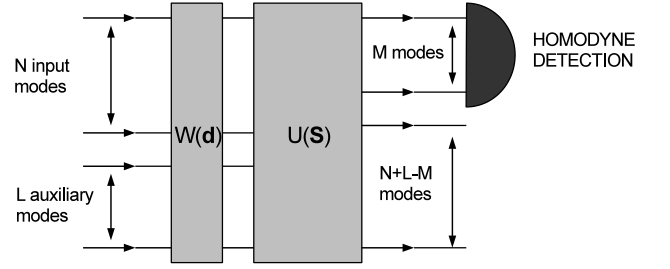}
\caption{Schematic of a measurement of a Gaussian observable. The $N$ input modes are coupled to $L$ auxiliary modes by two unitary couplings: a displacement $W({\bf d})$ and a symplectic unitary $U({\bf S})$. After the coupling, homodyne detection is performed on $M$ output modes, while the other $N+L-M$ modes are discarded.}\label{fig:dilation}
\end{figure}

Now suppose that the auxiliary $L$-mode field is in some Gaussian state $\sigma$ with covariance matrix ${\bf V}$ and displacement vector ${\bf m}$. Since we are interested only in $M\leq N+L$ modes, we must discard the other $K=N+L-M$ modes after the interaction. The resulting channel is then given by 
$$
\Phi(\rho) = \textrm{tr}_{ K}\left[W({\bf d})^* U({\bf S})^* (\rho \otimes \sigma) U({\bf S}) W({\bf d}) \right]
$$
where $\textrm{tr}_{K}[\cdot]$ denotes partial trace over the Hilbert space $\hil^{\otimes K}$ of the discarded modes. Without loss of generality we may assume the ordering of the Hilbert spaces to be fixed as
$$
\hil^{\otimes (N+L)} \simeq \hil^{\otimes N} \otimes \hil^{\otimes L} \simeq \hil^{\otimes M} \otimes \hil^{\otimes K}.
$$
The Weyl transform of the output state is then 
$$
\widehat{\Phi(\rho) } ({\bf x}) = \tr{ W({\bf d})^* U({\bf S})^* (\rho \otimes \sigma) U({\bf S}) W({\bf d}) W({\bf x}) \otimes I}
$$
so  that by denoting ${\bf x}'=(x_1,\ldots,x_{2M},0,\ldots,0)^T\in\R^{2(N+L)}$ we have  
$$
\widehat{\Phi(\rho) } ({\bf x})  = e^{-i{\bf d}^T{\bf \Omega x}'}(\widehat{\rho\otimes \sigma}) ({\bf Sx}').
$$
If we now write ${\bf S}$ in block form as
$$
{\bf S} = \left(\begin{array}{cc} {\bf S}_{11} & {\bf S}_{12} \\ {\bf S}_{21} & {\bf S}_{22} \end{array}\right)
$$
where ${\bf S}_{11}$ is a $2N\times 2M$-matrix and ${\bf S}_{21} $ is a $2L\times 2M$-matrix, and define $\widetilde{{\bf d}} = (d_1,\ldots, d_{2N})^T$ we have 
\begin{eqnarray*}
\widehat{\Phi(\rho) }  ({\bf x})  &=& e^{-i\widetilde{{\bf d}}^T{\bf \Omega x}}\, \widehat{\rho} ({\bf S}_{11}{\bf x})\,  \widehat{\sigma} ({\bf S}_{21}{\bf x})\\
&=&  \widehat{\rho} ({\bf A}{\bf x}) e^{-\frac{1}{4}{\bf x}^T {\bf Bx}-i {\bf v}^T{\bf x}}
\end{eqnarray*}
where $ {\bf A} = {\bf S}_{11}$, ${\bf B} = {\bf S}_{21}^T {\Omega}^T {\bf V\Omega} {\bf S}_{21}$, and ${\bf v} = {\bf S}_{21}^T {\bf \Omega m} - {\bf\Omega}\widetilde{{\bf d}}$. Finally, the parameters $({\bf A}_0, {\bf B}_0, {\bf v}_0)$ of the corresponding  Gaussian observable are obtained from these via Eqs.~\eqref{eqn:povm_matrix}.

%%%%%%%%%%
\section{Characterization of informational completeness}\label{sec:informational_completeness}
%%%%%%%%%%
We will next show a general method for determining the informational completeness of a given set of Gaussian observables. To this end, let $\E:\h B(\R^M)\to\h L(\hil^{\otimes N})$ be a Gaussian POVM with parameters $({\bf A}_0, {\bf B}_0 , {\bf v}_0)$. For an arbitrary state $\rho$, the Fourier transform of the probability measure $\p^{\E}_\rho(X) =\tr{\rho \E(X)}$ then reads
\begin{equation}\label{eqn:gaussian_fourier}
\widehat{\p}^{\E}_\rho ({\bf p})  = \tr{\rho \widehat{\E}({\bf p})} =  \widehat{\rho}\,  ({\bf A}_0{\bf p}) e^{-\frac{1}{4}{\bf p}^T {\bf B}_0 {\bf p} - i {\bf v}^T_0{\bf p}},
\end{equation}
and by linearity, \eqref{eqn:gaussian_fourier} also holds with $\rho$ replaced by a general trace class operator $S$ in which case the measure is complex valued. 
Since the Gaussian term on the right-hand-side of  Eq.~\eqref{eqn:gaussian_fourier} is non-zero, we can divide both sides by it. Hence, by measuring $\E$ we are able to  determine the values $\widehat{\rho}({\bf x})$ for all ${\bf x}$ in the set
$$
X_{\E} = \{  {\bf A}_0 {\bf p} \mid  {\bf p}\in\R^M\},
$$  
which is clearly a subspace of $\R^{2N}$. We are now ready to prove the main general result of this paper.

\begin{theorem}\label{prop:IC_set_of_gaussian_povms}
A set of Gaussian observables $\{\E_j :\h B(\R^{M_j})\to\h L(\hil^{\otimes N})\mid j\in\h I\}$ is informationally complete if and only if $\bigcup_{j\in\h I} X_{\E_j}$ is dense in $\R^{2N}$. 
\end{theorem}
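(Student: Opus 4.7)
The plan is as follows. Equation~\eqref{eqn:gaussian_fourier} shows that $\widehat{\p}_\rho^{\E_j}({\bf p})$ equals $\widehat\rho$ restricted to the subspace $X_{\E_j}$ times a nowhere vanishing Gaussian. Hence, by dividing out the Gaussian, the statistics $\p_\rho^{\E_j}$ determine $\widehat\rho|_{X_{\E_j}}$, and taking the union over $j$ one recovers $\widehat\rho$ on $Y := \bigcup_{j\in\h I} X_{\E_j}$. The ``if'' direction is then immediate: if $Y$ is dense, continuity of $\widehat\rho$ extends this partial knowledge to all of $\R^{2N}$, and a state is uniquely determined by its Weyl transform.

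For the converse I would argue contrapositively using Lemma~\ref{lemma:open_set}. Suppose $Y$ is not dense. Because each $X_{\E_j}$ is a linear subspace, $Y$ is centrally symmetric, so one can choose a nonempty open $U\subseteq\R^{2N}\setminus\overline{Y}$ satisfying $U=-U$ and $0\notin U$. Lemma~\ref{lemma:open_set} then provides a nonzero $S_0\in\h T(\hil^{\otimes N})$ with $\widehat{S_0}$ supported in $U$. Since $W({\bf x})^*=W(-{\bf x})$, we have $\widehat{S_0^*}({\bf x})=\overline{\widehat{S_0}(-{\bf x})}$, which together with $U=-U$ implies that at least one of the self-adjoint combinations $S_0+S_0^*$ and $i(S_0-S_0^*)$ is nonzero; call it $S$. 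Then $\widehat S$ is still supported in $U$, and $\tr{S}=\widehat S(0)=0$ because $0\notin U$.

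The final step is to convert $S$ into two distinct states producing identical statistics under every $\E_j$. Let $S=S_+-S_-$ be the Jordan decomposition; since $S\neq 0$ is self-adjoint with vanishing trace, both parts are nonzero with $\tr{S_+}=\tr{S_-}=c>0$, and $\rho_\pm:=S_\pm/c$ are two genuinely different states. Applying \eqref{eqn:gaussian_fourier} (valid for arbitrary trace class operators by linearity) to $S=c(\rho_+-\rho_-)$, and noting that ${\bf A}_0^{(j)}{\bf p}\in X_{\E_j}\subseteq Y$ while $\widehat S$ vanishes on $Y$, one obtains $\widehat{\p}_{\rho_+}^{\E_j}=\widehat{\p}_{\rho_-}^{\E_j}$ and hence $\p_{\rho_+}^{\E_j}=\p_{\rho_-}^{\E_j}$ for every $j$, contradicting informational completeness. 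The main subtlety is exactly this passage from a trace-class witness to two honest states: Lemma~\ref{lemma:open_set} only produces a possibly non-self-adjoint trace-class operator, and the Jordan-decomposition trick requires $S$ to be simultaneously self-adjoint and of trace zero, which is precisely why $U$ must be chosen symmetric about the origin and disjoint from it.
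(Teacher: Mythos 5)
Your proposal is correct and takes essentially the same route as the paper: Lemma~\ref{lemma:open_set} together with Eq.~\eqref{eqn:gaussian_fourier} for the converse, passing through the self-adjoint combinations $S_0+S_0^*$, $i(S_0-S_0^*)$ and a Jordan decomposition to produce two distinct states with identical statistics. The only cosmetic difference is that you enforce self-adjointness of the witness by choosing $U$ symmetric about the origin (and obtain tracelessness from $0\notin U$), whereas the paper takes adjoints of the identities $\tr{S\E_j(X)}=0$, using the self-adjointness of the POVM elements and the normalization $\E_j(\R^{M_j})=I$; both devices accomplish the same thing.
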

\begin{proof}
If $\bigcup_{j\in\h I} X_{\E_j}$ is dense in $\R^{2N}$, then by measuring the observables one can determine the values of $\widehat{\rho}$ on a dense set, so that by the continuity of the Weyl transform, $\rho$ is uniquely determined. Conversely, assume that $\bigcup_{j\in\h I} X_{\E_j}$ is not dense. Then there exists a nonempty open set $U$ in the complement of the closure of $\bigcup_{j\in\h I} X_{\E_j}$, so by Lemma~\ref{lemma:open_set}, there exists a nonzero $S\in\h T(\hil^{\otimes N})$ whose Weyl transform vanishes outside $U$. By \eqref{eqn:gaussian_fourier} and the injectivity of the Fourier transform we then have $\tr{S\E_j (X)} =0$ for each $j$, and all $X\in\h B(\R^{M_j})$. Since the POVM elements are positive, the same holds with $S$ replaced by either of the traceless selfadjoint operators $S+S^*$ and $i(S-S^*)$. At least one of these is nonzero, and hence a constant times the difference of two distinct density operators. By definition, this implies that the set $\{\E_j \mid j\in\h I\}$ is not informationally complete.
\end{proof}

The following two results now follow immediately.

\begin{corollary}\label{cor:IC_finite_gaussian}
If $\h I$ is a finite set, then $\{\E_j \mid j\in\h I\}$ is informationally complete if and only if $\E_{j_0}$ is informationally complete for some $j_0\in\h I$. 
\end{corollary}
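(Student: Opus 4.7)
The plan is to reduce the corollary to Theorem~\ref{prop:IC_set_of_gaussian_povms} and then exploit the linear-subspace structure of the sets $X_{\E_j}$. Recall that each $X_{\E_j} = {\bf A}_0^{(j)} \R^{M_j}$ is by construction a linear subspace of $\R^{2N}$, and hence a closed subset. So the density condition of Theorem~\ref{prop:IC_set_of_gaussian_povms}, applied to a finite index set $\h I$, is really a statement about a finite union of closed linear subspaces.

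First, I would dispose of the easy direction: if some $\E_{j_0}$ is itself informationally complete, then measuring only $\E_{j_0}$ already determines the state, so a fortiori the full family $\{\E_j \mid j\in\h I\}$ does. (Equivalently, $X_{\E_{j_0}}$ is already dense in $\R^{2N}$, hence so is the larger union.)

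For the nontrivial direction, assume $\{\E_j \mid j\in\h I\}$ is informationally complete with $\h I$ finite. By Theorem~\ref{prop:IC_set_of_gaussian_povms}, $\bigcup_{j\in\h I} X_{\E_j}$ is dense in $\R^{2N}$. Since each $X_{\E_j}$ is a closed linear subspace and $\h I$ is finite, the union is itself closed, so density upgrades to the equality $\bigcup_{j\in\h I} X_{\E_j} = \R^{2N}$. Now I would invoke the standard fact that a vector space over an infinite field cannot be written as a finite union of proper subspaces; applied to $\R^{2N}$, this forces $X_{\E_{j_0}} = \R^{2N}$ for at least one index $j_0\in\h I$. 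Applying Theorem~\ref{prop:IC_set_of_gaussian_povms} to the singleton family $\{\E_{j_0}\}$ then yields that $\E_{j_0}$ is informationally complete, as required.

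The only mildly nontrivial step is the observation that $\R^{2N}$ is not a finite union of proper subspaces; but this is elementary (e.g., choose a line in general position and use that each proper subspace meets it in at most finitely many points, or argue by induction on $|\h I|$). Everything else is just bookkeeping around Theorem~\ref{prop:IC_set_of_gaussian_povms}, so I do not anticipate any real obstacle.
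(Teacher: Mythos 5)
Your proof is correct and follows the same overall route as the paper: the easy direction is immediate, and for the converse you pass through Theorem~\ref{prop:IC_set_of_gaussian_povms}, use that a finite union of the closed subspaces $X_{\E_j}$ is closed so that density upgrades to $\bigcup_j X_{\E_j}=\R^{2N}$, and then conclude that one subspace must already be all of $\R^{2N}$. The only point where you diverge from the paper is the justification of that last step: you invoke the algebraic fact that a vector space over an infinite field is never a finite union of proper subspaces, whereas the paper argues measure-theoretically, noting that each proper subspace has zero Lebesgue measure and that subadditivity would then force $\R^{2N}$ to have zero measure. Both justifications are sound and elementary; yours is purely algebraic and works over any infinite field, while the paper's measure argument has the mild advantage of ruling out even countable unions of proper subspaces (though in that setting the closedness of the union, and hence the reduction from density to equality, would no longer be automatic, which is why the corollary is stated for finite $\h I$). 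No gap to report.
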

\begin{proof} The "if''-part is trivial. Since each $X_{\E_j}$ is closed and $\h I$ is finite, also the union $\bigcup_{j\in\h I} X_{\E_j}$ is closed. Hence, $\{\E_j \mid j\in\h I\}$ is informationally complete only if $\bigcup_{j\in\h I} X_{\E_j} = \R^{2N}$. But this is possible only if  $X_{\E_{j_0}}=\R^{2N}$ for some $j_0\in\h I$, because otherwise each subspace $X_{\E_j}$ would have zero Lebesgue measure and by subadditivity, so would also $\R^{2N}$. An application of Theorem \ref{prop:IC_set_of_gaussian_povms} completes the proof.
\end{proof}

\begin{corollary}\label{cor:IC_gaussian_povm}
A Gaussian observable $\E$ is informationally complete if and only if $\rank {\bf A}_0=2N$.
\end{corollary}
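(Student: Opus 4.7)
The plan is to deduce this immediately from Corollary~\ref{cor:IC_finite_gaussian} (or Theorem~\ref{prop:IC_set_of_gaussian_povms}) applied to the singleton index set $\h I = \{0\}$, together with the linear-algebraic observation that $X_\E$ is nothing but the column space (image) of the matrix ${\bf A}_0$.

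First I would note that for a single observable, Corollary~\ref{cor:IC_finite_gaussian} says $\E$ is informationally complete if and only if $X_\E = \R^{2N}$; alternatively, Theorem~\ref{prop:IC_set_of_gaussian_povms} gives the density of $X_\E$, and I would record that $X_\E = \{{\bf A}_0 {\bf p}\mid {\bf p}\in\R^M\}$ is by construction a linear subspace of the finite-dimensional space $\R^{2N}$, hence automatically closed, so density is equivalent to equality with $\R^{2N}$.

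Then the proof reduces to the standard fact that the image of the linear map ${\bf p}\mapsto {\bf A}_0 {\bf p}$ equals $\R^{2N}$ if and only if $\rank {\bf A}_0 = 2N$. Combining the two equivalences yields the corollary.

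There is no real obstacle here: the content of the statement is entirely extracted by the previous theorem, and what remains is an elementary linear-algebra tautology. The only thing worth emphasizing in the write-up is the distinction between the general density criterion of Theorem~\ref{prop:IC_set_of_gaussian_povms} and its collapse, in the single-observable case, to a rank condition, which is a consequence of the fact that proper subspaces of $\R^{2N}$ are never dense.
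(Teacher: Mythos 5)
Your argument is correct and coincides with the paper's own proof: the paper likewise applies the density criterion of Theorem~\ref{prop:IC_set_of_gaussian_povms} to a single observable, notes that the subspace $X_\E$ is closed so density forces $X_\E=\R^{2N}$, and identifies this with the condition $\rank {\bf A}_0 = 2N$. Nothing further is needed.
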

\begin{proof}
Being a subspace, $X_\E$ is closed, so $\E$ is informationally complete if and only if $X_{\E}=\R^{2N}$, i.e., $\rank {\bf A}_0=2N$.
\end{proof}

We now proceed to investigate the informational completeness of a single Gaussian observable more carefully.

\begin{proposition}\label{prop:minimal}
Let $X\subset \R^{2N}$ be a linear subspace and denote $M=\dim X\leq 2N$. Then there exists a Gaussian observable $\E:\h B(\R^M)\to \h L(\hil^{\otimes N}) $ such that $X_\E=X$. In particular, for each $N$ there exist both informationally complete an informationally incomplete Gaussian observables.
\end{proposition}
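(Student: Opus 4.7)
The plan is to construct $\E$ by specifying its parameter triple $({\bf A}_0,{\bf B}_0,{\bf v}_0)$ directly and then invoking the dilation construction of Section~\ref{sec:dilation} to confirm that this triple defines a bona fide Gaussian POVM. The first parameter is essentially forced on us by the requirement $X_\E=X$: since $X_\E=\{{\bf A}_0{\bf p}\mid {\bf p}\in\R^M\}$ is the range of ${\bf A}_0$, I would pick any $2N\times M$ real matrix ${\bf A}_0$ whose columns form a basis of the given subspace $X$. Then $X_\E=X$ automatically, irrespective of the other parameters. The displacement ${\bf v}_0\in\R^M$ plays no role in informational completeness and may be set to zero.

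The only nontrivial point is the existence of a real symmetric $M\times M$ matrix ${\bf B}_0$ satisfying the positivity condition \eqref{eqn:gaussian_povm_condition}, namely ${\bf B}_0\geq i{\bf A}_0^T{\bf\Omega}_N{\bf A}_0$. Since ${\bf\Omega}_N$ is antisymmetric, the matrix ${\bf J}:={\bf A}_0^T{\bf\Omega}_N{\bf A}_0$ is real antisymmetric, so $i{\bf J}$ is Hermitian with purely real eigenvalues bounded in absolute value by the operator norm $\|{\bf J}\|$. Consequently, the choice ${\bf B}_0=c\,{\bf I}_M$ with $c\geq\|{\bf J}\|$ satisfies the condition. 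The dilation construction of Section~\ref{sec:dilation} then produces a Gaussian channel whose composition with homodyne detection realizes a Gaussian observable with the prescribed parameters $({\bf A}_0,{\bf B}_0,{\bf v}_0)$, and by construction its range set is $X_\E=X$.

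For the final claim, Corollary~\ref{cor:IC_gaussian_povm} characterizes informational completeness of a single Gaussian observable by $\rank{\bf A}_0=2N$. Taking $X=\R^{2N}$ in the construction above yields an informationally complete Gaussian observable for any $N$, while taking $X$ to be any proper subspace (e.g.\ $M<2N$) yields an informationally incomplete one. The main technical point, which is purely linear-algebraic, is the existence of ${\bf B}_0$ meeting the positivity condition for an arbitrary choice of ${\bf A}_0$; once that is settled, everything else reduces to bookkeeping through the dilation formalism already developed.
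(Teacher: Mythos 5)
Your proof is correct and follows essentially the same route as the paper: choose ${\bf A}_0$ with columns spanning $X$, note that $i{\bf A}_0^T{\bf \Omega}_N{\bf A}_0$ is Hermitian so a suitable ${\bf B}_0$ (e.g.\ your explicit $c\,{\bf I}_M$ with $c\geq\|{\bf A}_0^T{\bf \Omega}_N{\bf A}_0\|$) satisfies \eqref{eqn:gaussian_povm_condition}, and conclude via Corollary~\ref{cor:IC_gaussian_povm}. Your explicit choice of ${\bf B}_0$ and the appeal to the dilation construction of Section~\ref{sec:dilation} to certify that the triple really defines a POVM only make explicit what the paper leaves implicit.
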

\begin{proof}
Let $\{ {\bf e}_j \mid j=1,\ldots, M \}$ be an orthonormal basis of $X$ and define ${\bf A}_0= \left( {\bf e}_1 \ldots {\bf e}_M\right)$ so that ${\bf A}_0$ is a $2N\times M$ -matrix.  Since $-i{\bf A}_0^T\Omega_N{\bf A}_0$ is Hermitian, there exists a matrix ${\bf B}_0$, such that \eqref{eqn:gaussian_povm_condition} is satisfied. Then for any ${\bf v}_0\in\R^M$ the parameters $({\bf A}_0, {\bf B}_0, {\bf v}_0)$ define a Gaussian observable $\E$ which satisfies $X_\E =X$. The last claim then follows by choosing $M=2N$ and, e.g., $M=1$.
\end{proof}

Prop.~\ref{prop:minimal} shows that if the dimension of $X_\E$ is strictly smaller then the dimension of the measurement outcome space of $\E$, then from the state reconstruction point of view there is some redundancy in the measurement setup. Indeed, the same information about the state can be obtained with a smaller outcome space and hence, with fewer homodyne measurements. In particular, for an $N$-mode field it is necessary but also sufficient to couple the field to $N$ auxiliary modes, thus giving the outcome space dimension $2N$. In this case, we say that an observable satisfying the rank condition of Cor.~\ref{cor:IC_gaussian_povm} is a minimal informationally complete Gaussian observable. Clearly, this condition is equivalent to ${\bf A}_0$ being invertible, i.e., $\det{\bf A}_0\neq 0$.

One immediate consequence of the determinant condition is worth noting explicitly: a triple $({\bf A}_0, {\bf B}_0 , {\bf v}_0)$ with $2N\times 2N$-matrices ${\bf A}_0, {\bf B}_0$ and ${\bf v}_0\in \R^{2N}$, if drawn randomly from the subset given by \eqref{eqn:gaussian_povm_condition}, according to any nonsingular probability density, defines a minimal informationally complete Gaussian observable with probability one. Hence, almost all Gaussian observables with minimal outcome space are informationally complete.

%%%%%%%%%%%
\section{Covariant Gaussian observables}\label{sec:covariant}
%%%%%%%%%%%
We have seen that a minimal informationally complete Gaussian observable has outcome space $\R^{2N}$, which we can identify with the phase space, whose translations and symplectic transformations act as unitary transformations on the range of the observable. We now look at the consequences of this identification.

An observable $\E:\h B(\R^{2N})\to\h L(\hil^{\otimes N})$ is a covariant phase space observable if it satisfies
$$
W({\bf x}) \E(X) W({\bf x})^* = \E(X+{\bf x}).
$$
A direct computation using the characteristic function \eqref{eqn:gaussian_characteristic_function} and the Weyl relations \eqref{eqn:weyl_relation} shows that a Gaussian phase space  observable $\E$, parametrized by $({\bf A}_0,{\bf B}_0,{\bf v}_0)$, is covariant if and only if
\begin{equation}\label{eqn:gaussian_covariance}
{\bf A}_0=-{\bf \Omega}_N.
\end{equation}
According to Cor.~\ref{cor:IC_gaussian_povm}, every covariant Gaussian phase space observable is thus informationally complete. This result was previously obtained, e.g., in \cite{Ali1977, Werner1984}. As the next proposition shows, any other minimal informationally complete Gaussian observable is connected to a covariant one via linear postprocessing.

\begin{proposition}\label{prop:infocomplete_covariant} A Gaussian observable $\E:\h B(\R^{2N})\to\h L(\hil^{\otimes N})$ is informationally complete if and only if it is a bijective linear postprocessing of a covariant Gaussian observable.
\end{proposition}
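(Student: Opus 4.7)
The ``if'' direction is immediate from the machinery already developed: by Eq.~\eqref{eqn:gaussian_covariance} a covariant Gaussian phase space observable has ${\bf A}_0'=-{\bf \Omega}_N$, which has rank $2N$, so Cor.~\ref{cor:IC_gaussian_povm} gives informational completeness; and the paper has already noted that bijective linear postprocessing preserves informational completeness. So the whole content is in the ``only if'' direction, and my plan is to solve for ${\bf P}$ explicitly from the parameter-transformation formula $({\bf A}_0,{\bf B}_0,{\bf v}_0)=({\bf A}_0'{\bf P}^T,{\bf P}{\bf B}_0'{\bf P}^T,{\bf P}{\bf v}_0')$ that characterizes $\E'_{\bf P}$.

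Suppose $\E$ is informationally complete with parameters $({\bf A}_0,{\bf B}_0,{\bf v}_0)$. By Cor.~\ref{cor:IC_gaussian_povm} the $2N\times 2N$ matrix ${\bf A}_0$ is invertible. Setting the target ${\bf A}_0'=-{\bf \Omega}_N$ in ${\bf A}_0={\bf A}_0'{\bf P}^T$ and using ${\bf \Omega}_N^{-1}=-{\bf \Omega}_N$ (since $\omega^2=-{\bf 1}$), I get the forced choice
$$
{\bf P}=-{\bf A}_0^T{\bf \Omega}_N,
$$
which is invertible because ${\bf A}_0$ and ${\bf \Omega}_N$ are. I then define $\E'$ by the parameters $({\bf A}_0',{\bf B}_0',{\bf v}_0'):=(-{\bf \Omega}_N,\,{\bf P}^{-1}{\bf B}_0({\bf P}^T)^{-1},\,{\bf P}^{-1}{\bf v}_0)$, so that by construction the triple transforms back to $({\bf A}_0,{\bf B}_0,{\bf v}_0)$ under the postprocessing formula, hence $\E=\E'_{\bf P}$ provided $\E'$ actually defines a legitimate Gaussian observable.

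The one routine check is the positive-definiteness condition \eqref{eqn:gaussian_povm_condition} for the primed parameters, namely ${\bf B}_0'-i{\bf \Omega}_N\geq{\bf 0}$ (using ${\bf \Omega}_N^T{\bf \Omega}_N{\bf \Omega}_N={\bf \Omega}_N$). Substituting the parameter-transformation formula into the \emph{known} inequality for $\E$,
$$
{\bf 0}\leq{\bf B}_0-i{\bf A}_0^T{\bf \Omega}_N{\bf A}_0={\bf P}\bigl({\bf B}_0'-i{\bf A}_0'^T{\bf \Omega}_N{\bf A}_0'\bigr){\bf P}^T,
$$
and invoking invertibility of ${\bf P}$, the inequality for $\E'$ follows by conjugation. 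Finally, ${\bf v}_0'\in\R^{2N}$ is automatic. This supplies the covariant Gaussian observable $\E'$ with $\E=\E'_{\bf P}$, completing the ``only if'' direction. The only subtlety I anticipate is bookkeeping with transposes and the identity ${\bf \Omega}_N^{-1}=-{\bf \Omega}_N$; otherwise the proof is a direct translation of the covariance condition \eqref{eqn:gaussian_covariance} through the transformation law for linear postprocessing.
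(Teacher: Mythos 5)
Your proof is correct and follows essentially the same route as the paper's: the same forced choice ${\bf P}=-{\bf A}_0^T{\bf \Omega}_N$, the same primed triple $(-{\bf \Omega}_N,\,{\bf P}^{-1}{\bf B}_0[{\bf P}^{-1}]^T,\,{\bf P}^{-1}{\bf v}_0)$, and Cor.~\ref{cor:IC_gaussian_povm} for the easy direction. Your explicit congruence check that the primed parameters satisfy \eqref{eqn:gaussian_povm_condition} is a welcome detail that the paper leaves implicit.
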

\begin{proof} Let $\E^{\rm cov}:\h B(\R^{2N})\to\h L(\hil^{\otimes N})$ be a covariant Gaussian observable, parametrized by $(-{\bf \Omega}_N, {\bf B}^{\rm cov}_0 , {\bf v}^{\rm cov}_0)$, and ${\bf P}$ any invertible matrix. The postprocessing $\E^{\rm cov}_{\bf P}$ is then parametrized by $(-{\bf \Omega}_N{\bf P}^T,{\bf P}{\bf B}^{\rm cov}_0{\bf P}^T,{\bf P}{\bf v}^{\rm cov}_0)$, and is thus informationally complete by Corollary \ref{cor:IC_gaussian_povm}. Conversely, suppose that $\E$ is an informationally complete Gaussian observable, parametrized by $({\bf A}_0, {\bf B}_0 , {\bf v}_0)$. Then ${\bf A}_0$ is invertible, so we can define an invertible matrix ${\bf P}=-{\bf A}_0^T{\bf \Omega}_N$, and further define ${\bf B}_0^{\rm cov}={\bf P}^{-1}{\bf B}_0[{\bf P}^{-1}]^T$, ${\bf v}^{\rm cov}_0={\bf P}^{-1}{\bf v}_0$. Then the triple $(-{\bf \Omega}_N, {\bf B}_0^{\rm cov}, {\bf v}_0^{\rm cov})$ determines a covariant Gaussian observable $\E^{\rm cov}$ for which $\E^{\rm cov}_{\bf P}=\E$.
\end{proof}

Next we take into account the symplectic structure of the phase space. The change \eqref{eqn:coordinate_transformation} of the canonical coordinates via a symplectic matrix ${\bf S}$ transforms a phase space observable $\E:\h B(\R^{2N})\to\h L(\hil^{\otimes N})$ into the phase space observable ${\bf S}(\E)$ given by
\begin{equation}\label{eqn:canonical_transformation_observable}
{\bf S}(\E)(X)= U({\bf S})\E({\bf S}^{-1}(X))U({\bf S}).
\end{equation}
In particular, a covariant Gaussian observable parametrized by $(-{\bf \Omega}_N, {\bf B}^{\rm cov}_0 , {\bf v}^{\rm cov}_0)$ transforms into the covariant Gaussian observable given by $(-{\bf \Omega}_N, {\bf S}{\bf B}^{\rm cov}_0{\bf S}^T , {\bf S} {\bf v}^{\rm cov}_0)$, because ${\bf S} {\bf \Omega}_N{\bf S}^T={\bf \Omega}_N$. Since the positivity condition \eqref{eqn:gaussian_povm_condition} reduces to ${\bf B}^{\rm cov}_0 - i {\bf \Omega}_N \geq {\bf 0}$, we can use Williamson's theorem \cite{Williamson1936} to choose the symplectic matrix ${\bf S}$ such that it diagonalizes ${\bf B}^{\rm cov}_0$, i.e., ${\bf S}{\bf B}^{\rm cov}_0 {\bf S}^T  = \bigoplus_{k=1}^N \beta_k {\bf I}$, where the symplectic eigenvalues of ${\bf B}^{\rm cov}_0$ satisfy $\beta_k\geq 1$. Letting $\mu:\h B(\R^{2N})\to[0,1]$ be the Gaussian probability measure such that $\widehat{\mu}({\bf p}) = e^{-\frac{1}{4} {\bf p}^T \left(\bigoplus_{k=1}^N (\beta_k-1) {\bf I} \right) {\bf p} -i({\bf S v}^{\rm cov}_0)^T{\bf p}}$, the transformed covariant observable is thus
\begin{equation*}
{\bf S}(\E^{\rm cov})=\mu*\E^{Q},
\end{equation*}
where $\E^{Q}$ is the is the $Q$-function of the state, i.e., the covariant Gaussian observable with $(-{\bf \Omega}_N, {\bf I}, {\bf 0})$. To summarize, each covariant Gaussian observable is a Gaussian smearing of the $Q$-function, up to the choice of the canonical coordinates of the phase space. Combining this with Prop.~\ref{prop:infocomplete_covariant}, we have the following result:
\begin{proposition}\label{prop:minimal_infocomplete} A Gaussian observable $\E:\h B(\R^{2N})\to\h L(\hil^{\otimes N})$ is informationally complete if and only if there exist an invertible matrix ${\bf P}$, a symplectic matrix ${\bf S}$, and a Gaussian probability measure $\mu$, such that
\begin{equation*}
\E={\bf S}(\mu*\E^{Q})_{\bf P}.
\end{equation*}
\end{proposition}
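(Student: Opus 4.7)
The plan is to combine Proposition~\ref{prop:infocomplete_covariant} with the Williamson-based construction carried out in the paragraph preceding the proposition. That discussion already shows that every covariant Gaussian observable $\E^{\rm cov}$ can, after a symplectic change of coordinates, be written as a Gaussian smearing of the $Q$-function: there exists a symplectic matrix ${\bf T}$ and a Gaussian probability measure $\mu$ with ${\bf T}(\E^{\rm cov}) = \mu * \E^{Q}$. Proposition~\ref{prop:infocomplete_covariant}, in turn, characterises the informationally complete Gaussian observables on $\R^{2N}$ as precisely the bijective linear postprocessings of the covariant ones. Both directions of the proposition should follow by chaining these two facts.

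For the ``if'' direction, I would start from $\E^{Q}$ and check step by step that informational completeness is preserved. First, $\E^{Q}$ is informationally complete by Corollary~\ref{cor:IC_gaussian_povm}, since its parameter ${\bf A}_0=-{\bf \Omega}_N$ has rank $2N$. Next, the Gaussian smearing $\mu*\E^{Q}$ is informationally complete because $\widehat{\mu}$ is nowhere zero, as observed in Section~III. The symplectic action ${\bf S}(\cdot)$ is just unitary conjugation by $U({\bf S})$ composed with a bijective relabeling of outcomes, and the identity $\tr{\rho\, {\bf S}(\F)(X)}=\tr{U({\bf S})^*\rho\, U({\bf S})\,\F({\bf S}^{-1}(X))}$ together with the fact that $\rho\mapsto U({\bf S})^*\rho\, U({\bf S})$ is a bijection of states shows that it intertwines informational completeness. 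Finally, bijective linear postprocessing by ${\bf P}$ preserves informational completeness, again noted in Section~III.

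For the ``only if'' direction, let $\E$ be informationally complete. Proposition~\ref{prop:infocomplete_covariant} supplies an invertible ${\bf P}$ and a covariant Gaussian observable $\E^{\rm cov}$ with $\E=(\E^{\rm cov})_{\bf P}$. The Williamson-based construction then yields a symplectic ${\bf T}$ and a Gaussian probability measure $\mu$ such that ${\bf T}(\E^{\rm cov})=\mu*\E^{Q}$, and hence $\E^{\rm cov}={\bf T}^{-1}(\mu*\E^{Q})$. Setting ${\bf S}:={\bf T}^{-1}$, which is again symplectic, and substituting produces the claimed decomposition $\E={\bf S}(\mu*\E^{Q})_{\bf P}$.

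The only mild delicacy is a bit of bookkeeping: ${\bf S}(\cdot)$ conjugates by $U({\bf S})$ and pulls outcome sets back by ${\bf S}^{-1}$, whereas $(\cdot)_{\bf P}$ only pulls outcome sets back by ${\bf P}^{-1}$. Because the two outcome-set pullbacks commute and the unitary conjugation is unaffected by them, the expression ${\bf S}(\mu*\E^{Q})_{\bf P}$ is unambiguous and equal to $U({\bf S})(\mu*\E^{Q})(({\bf PS})^{-1}(\cdot))U({\bf S})^*$. No step requires a genuinely new argument, so I do not anticipate any substantive obstacle beyond this verification.
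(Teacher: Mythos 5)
Your argument is correct and is essentially the paper's own: the proposition is obtained by chaining Prop.~\ref{prop:infocomplete_covariant} with the Williamson-based identity ${\bf S}(\E^{\rm cov})=\mu*\E^{Q}$ established in the preceding paragraph, exactly as you do, with the ``if'' direction following from the preservation of informational completeness under Gaussian smearing, symplectic coordinate change, and bijective linear postprocessing. One small correction to your closing remark: the two outcome pullbacks do \emph{not} commute in general, since $({\bf S}(\F))_{\bf P}$ pulls sets back along $({\bf P}{\bf S})^{-1}$ while ${\bf S}(\F_{\bf P})$ pulls them back along $({\bf S}{\bf P})^{-1}$; the ambiguity in the expression ${\bf S}(\mu*\E^{Q})_{\bf P}$ is nevertheless harmless because ${\bf P}$ ranges over all invertible matrices, so the two readings are interchanged by substituting ${\bf S}^{-1}{\bf P}{\bf S}$ for ${\bf P}$, and the statement of the proposition is unaffected.
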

In other words, any minimal informationally complete Gaussian observable coincides with a postprocessing of $\E^{\rm Q}$, up to the choice of the canonical coordinates.

Notice that the any covariant Gaussian observable can be measured with the following setup: the signal $N$-mode field is coupled to $N$ auxiliary parameter modes in a Gaussian state   by means of 50:50 beam splitters, and then rotations of $-\pi/2$ are performed on half of the output modes (the case of a single mode is illustrated in Fig.~\ref{fig:eightport}). By varying the Gaussian state of the parameter field, one can obtain an arbitrary covariant Gaussian observable. In particular, a measurement of the $Q$-function is obtained by choosing the parameter field to be the vacuum (i.e., {\bf V}={\bf I}, {\bf m}={\bf 0}). 
\begin{figure}
\includegraphics[width=8cm]{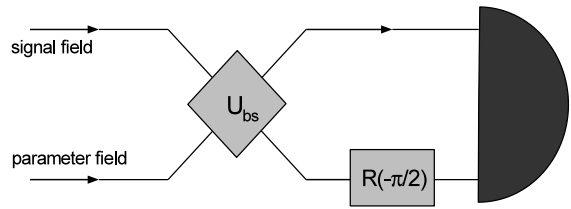}
\caption{Measurement scheme for a single mode covariant Gaussian observable. The signal field is coupled to a parameter field by means of a 50:50 beam splitter, after which a rotation of $-\pi/2$ is performed on one of the output modes. Homodyne detection is then performed on the two output modes.}\label{fig:eightport}
\end{figure}

%%%%%%%%%%
\section{Commutative and sharp Gaussian observables}
%%%%%%%%%%
Recall that an observable $\E:\h B(\R^M)\to\h L(\hil^{\otimes N})$ is commutative, if $\E(X)\E(Y)=\E(Y)\E(X)$ for all $X,Y$, and sharp if $\E(X)\E(Y)=\E(X\cap Y)$. It is well-known that an observable is sharp if and only if it is  projection valued, i.e., a spectral measure. 

Clearly, sharp observables are always commutative. Conversely, every commutative observable is obtained from some sharp observable by postprocessing with a Markov kernel (see \cite[Thm.\ 4.4]{Jencova2008}). Smearing with a convolution kernel is a special case, which however turns out to be sufficient to get all commutative Gaussian observables from sharp ones.

In terms of characteristic functions, $\E$ is sharp if and only if the map ${\bf p}\mapsto \widehat{\E}({\bf p})$ is a unitary representation of $\R^M$, i.e.,
\begin{equation}\label{sharp_characteristic_function}
\widehat{\E}({\bf p})\widehat{\E}({\bf p}')=\widehat{\E}({\bf p}+{\bf p}') \text{ for all } {\bf p},{\bf p}'\in \R^{M}.
\end{equation}
Similarly, $\E$ is commutative if and only if
\begin{equation}\label{commutative_characteristic_function}
\widehat{\E}({\bf p})\widehat{\E}({\bf p}')=\widehat{\E}({\bf p}')\widehat{\E}({\bf p}) \text{ for all } {\bf p},{\bf p}'\in \R^{M}.
\end{equation}
This follows by standard approximation arguments from the fact that $\int \widehat{f}({\bf x}) \, \E(\de{\bf x})=\int f({\bf p}) \widehat{\E}({\bf p})\,\de{\bf p}$ holds for any integrable $f:\R^{M}\to \C$, where $\widehat{f}$ is the (suitably normalized) Fourier transform.

In the case of Gaussian observables, the relationship between commutativity and sharpness can now be analyzed explicitly. By \eqref{commutative_characteristic_function}, $\E$ is a commutative if and only if all the Weyl operators $W({\bf A}_0{\bf p})$ commute, which happens exactly when
\begin{equation}\label{eqn:weyl_unitarity_condition}
{\bf A}_0^T {\bf \Omega}_N{\bf A}_0={\bf 0}.
\end{equation}
Note that this condition is exactly the one that makes ${\bf p}\mapsto W({\bf A}_0{\bf p})$ a unitary representation, and thus a characteristic function of a sharp Gaussian observable which we denote by $\sfq_{{\bf A}_0}:\h B(\R^M)\to \h L(\hil^{\otimes N})$. Now if \eqref{eqn:weyl_unitarity_condition} holds, then ${\bf B}_0\geq {\bf 0}$ by the positivity condition \eqref{eqn:gaussian_povm_condition}. Hence, there exists a Gaussian probability measure $\mu:\h B(\R^M)\to[0,1]$, such that $\widehat{\mu}({\bf p}) = e^{-\frac{1}{4}{\bf p}^T{\bf B}_0 {\bf p}-i{\bf v}_0^T{\bf p}}$, and we have
\begin{equation}\label{eqn:commutative_vs_sharp}
\E=\mu*\sfq_{{\bf A}_0}.
\end{equation}
In conclusion, we have proved that a Gaussian observable is commutative exactly when it is a Gaussian smearing of a sharp Gaussian observable.

Since sharp observables are commutative, we can use the \eqref{eqn:commutative_vs_sharp} along with condition \eqref{sharp_characteristic_function} to characterize sharp Gaussian observables. Indeed, since now $\widehat{\E}({\bf p})=\widehat{\mu}({\bf p}) \,\widehat{\sfq}_{{\bf A}_0}({\bf p})$ and the operators $\widehat{\sfq}_{{\bf A}_0}({\bf p})$ are unitary, we have that $\E$ is sharp if and only if $\widehat{\mu}({\bf p+p'})= \widehat{\mu}({\bf p})\widehat{\mu}({\bf p'})$ for all ${\bf p,p'}\in\R^M$. But since $\mu$ is Gaussian, this reduces to the condition ${\bf B_0}+{\bf B}_0^T={\bf 0}$. This implies $({\bf B}_0)_{jj} = -({\bf B}_0)_{jj}=0$, and since ${\bf B}_0\geq {\bf 0}$ by \eqref{eqn:gaussian_povm_condition}, we have ${\bf B}_0={\bf 0}$. Hence, sharp Gaussian observables are parametrized by $({\bf A}_0,{\bf 0},{\bf v}_0)$ with \eqref{eqn:weyl_unitarity_condition}. Note that this can be obtained from the unbiased sharp Gaussian observable $\sfq_{{\bf A}_0}$ by shifting the outcomes with the vector ${\bf v}_0$. Here "unbiased" refers to the fact that for any state with zero expectation for all canonical quadratures $Q_j$, $P_j$, also the expectation of $\sfq_{{\bf A}_0}$ vanishes.

It is known that a single commutative observable is never informationally complete \cite[Thm.\ 2.1.2]{Busch1989}. For commutative Gaussian observables with minimal outcome space $(M=2N)$ this follows immediately from Cor.~\ref{cor:IC_gaussian_povm}, because we have $0=\det {\bf A}_0^T {\bf \Omega}_N{\bf A}_0=2N(\det {\bf A}_0)^2$.  For a general commutative Gaussian observable this can still be seen directly by noting that the rank of ${\bf A}_0$ is at most $N$, as can be seen by applying the Frobenius inequality \cite[Eq.~4.3.3(9), p. 61]{Lutkepohl}:
\begin{eqnarray*}
\rank {\bf A}_0 &=& \frac{1}{2} \left(  \rank {\bf A}_0^T {\bf \Omega}_N   +  \rank {\bf \Omega}_N{\bf A}_0 \right) \\
&\leq &  \frac{1}{2} \left( \rank {\bf \Omega}_N + \rank {\bf A}_0^T {\bf \Omega}_N {\bf A}_0 \right) =N.
\end{eqnarray*}
Cor.~\ref{cor:IC_finite_gaussian} then tells that no finite set of commutative Gaussian observables is informationally complete.

It is worth noting that in the case of a single mode ($N=1$), and minimal outcome space ($M=2$) we have 
$$
{\bf A}_0^T {\bf \Omega} {\bf A}_0 = \left( \begin{array}{cc} 0 & \det {\bf A}_0 \\ -\det {\bf A}_0 & 0 \end{array} \right)
$$
so that $\E$ is informationally complete if and only if $\E$ is noncommutative. In other words, informational incompleteness is equivalent to the observable being a smearing of a sharp observable. Note that this is not true for multiple modes. Indeed, using the above result that $\rank {\bf A}_0\leq N$ for commutative observables, we see that if $N>1$, any given $2N\times 2N$-matrix ${\bf A}_0$ with $N<\rank {\bf A}_0<2N$, together with a ${\bf B}_0$ chosen to satisfy \eqref{eqn:gaussian_povm_condition}, determine a Gaussian observable which is informationally incomplete and noncommutative.

Before proceeding to more specific cases, we make one more observation. It follows from \eqref{eqn:commutative_vs_sharp} that a set of commutative Gaussian observables with matrices ${\bf A}^j_0$, $j\in \h I$ is informationally complete if and only if the set $\{ \sfq_{{\bf A}^j_0}\mid j\in \h I\}$ of the corresponding sharp observables is such. In other words, a set of commutative observables can always be reduced to a set of sharp observables, without affecting informational completeness. Accordingly, we concentrate mostly on sharp observables in the concrete examples.

Consider now the special case of a one dimensional outcome space. Then for any Gaussian observable $\E$, the matrix ${\bf A}_0$ is, in fact, a vector in $\R^{2N}$, and we let ${\bf A}_0 =(a_1,\ldots ,a_{2N})^T$. The condition \eqref{eqn:weyl_unitarity_condition} is automatically satisfied, so we conclude that every Gaussian observable $\E:\h B(\R)\to\h L(\hil^{\otimes N})$ is commutative, and thus a Gaussian smearing of the corresponding sharp observable $\sfq_{{\bf A}_0}$. Furthermore, if $\E:\h B(\R)\to\h L(\hil^{\otimes N})$ is a Gaussian observable with matrix ${\bf A}_0$, then $\E$ is a Gaussian smearing of $\sfq_{{\bf a}}$ with ${\bf a}\in \R^{2N}$, exactly when ${\bf A}_0={\bf a}$.

The observable $\sfq_{{\bf A}_0}$ has the characteristic function $\widehat{\sfq}_{{\bf A}_0}(p) = e^{ip(-{\bf A}_0^T {\bf \Omega }_N{\bf R} )}$ so that  $\sfq_{{\bf A}_0}$ is simply the spectral measure of the selfadjoint operator 
$$
-{\bf A}_0^T {\bf \Omega}_N {\bf R} = \sum_{j=1}^N (a_{2j} Q_j - a_{2j-1} P_j). 
$$
These operators are sometimes  referred to as a generalized field quadratures, and they have also previously been considered in the context of quantum tomography \cite{DAriano1996}. Thm.~\ref{prop:IC_set_of_gaussian_povms} now reduces to the following result:

\begin{proposition}\label{prop:one_dim_case}
Let $\E_j:\h B(\R)\to\h L(\hil^{\otimes N})$, $j\in \h I$ be Gaussian observables with corresponding vectors ${\bf A}_0^j\in \R^{2N}$. Then $\{\E_j\mid j\in \h I\}$ is informationally complete if and only if $\{ \epsilon\, {\bf A}_0^j/\| {\bf A}_0^j\|\mid j\in \h I, \epsilon=\pm 1\}$ is dense in the surface of the unit ball of $\R^{2N}$.
\end{proposition}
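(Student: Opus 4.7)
The plan is to reduce the statement to Theorem \ref{prop:IC_set_of_gaussian_povms} and then prove a purely geometric equivalence between density of a union of lines in $\R^{2N}$ and density of the corresponding direction set in the unit sphere.

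First I would observe that when $M_j=1$ the ``matrix'' ${\bf A}_0^j$ is simply a column vector in $\R^{2N}$, so the subspace associated with $\E_j$ by the general construction is
\[
X_{\E_j} = \{ p\,{\bf A}_0^j \mid p\in\R\},
\]
which is either the origin (if ${\bf A}_0^j={\bf 0}$, a trivial case) or the one-dimensional line through the origin spanned by ${\bf A}_0^j$. Theorem \ref{prop:IC_set_of_gaussian_povms} then tells us that $\{\E_j\mid j\in\h I\}$ is informationally complete if and only if the union of these lines, $L:=\bigcup_{j\in\h I}\R\cdot{\bf A}_0^j$, is dense in $\R^{2N}$. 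Since trivial observables contribute only $\{{\bf 0}\}$ to $L$ and $\{{\bf 0}\}$ to the direction set on either side, they can be discarded; write $D:=\{\epsilon\,{\bf A}_0^j/\|{\bf A}_0^j\|\mid j\in\h I,\,{\bf A}_0^j\neq{\bf 0},\,\epsilon=\pm 1\}\subset S^{2N-1}$ for the relevant direction set.

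Next I would prove the geometric equivalence ``$L$ is dense in $\R^{2N}$ iff $D$ is dense in $S^{2N-1}$.'' For the ($\Leftarrow$) direction, given ${\bf x}\in\R^{2N}\setminus\{{\bf 0}\}$ and $\varepsilon>0$, set $r=\|{\bf x}\|$ and use density of $D$ to pick a point ${\bf v}\in D$ with $\|{\bf v}-{\bf x}/r\|<\varepsilon/r$; then $r{\bf v}\in L$ and $\|r{\bf v}-{\bf x}\|<\varepsilon$ (the origin itself lies on every line, so it is trivially in $L$). For the ($\Rightarrow$) direction, given ${\bf u}\in S^{2N-1}$ and $0<\varepsilon<1$, use density of $L$ to find $p\,{\bf A}_0^j$ with $\|p\,{\bf A}_0^j-{\bf u}\|<\varepsilon/2$; necessarily $p\neq 0$, so set ${\bf v}=\mathrm{sign}(p)\,{\bf A}_0^j/\|{\bf A}_0^j\|\in D$ and $\lambda=|p|\,\|{\bf A}_0^j\|>0$, giving $p\,{\bf A}_0^j=\lambda{\bf v}$. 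Then the reverse triangle inequality yields $|\lambda-1|=|\|\lambda{\bf v}\|-\|{\bf u}\||\leq\|\lambda{\bf v}-{\bf u}\|<\varepsilon/2$, and hence
\[
\|{\bf v}-{\bf u}\|\leq\|{\bf v}-\lambda{\bf v}\|+\|\lambda{\bf v}-{\bf u}\|=|\lambda-1|+\|\lambda{\bf v}-{\bf u}\|<\varepsilon,
\]
which shows ${\bf u}$ is in the closure of $D$.

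The main obstacle, if there is one, is purely bookkeeping: making sure the sign $\epsilon=\pm 1$ is handled correctly in the ($\Rightarrow$) direction (one needs the antipodal pair, not just the positive normalization, since the coefficient $p$ in $L$ can be negative) and treating the degenerate case ${\bf A}_0^j={\bf 0}$. Neither requires any substantive new idea, and invoking Theorem \ref{prop:IC_set_of_gaussian_povms} in the first step does all the analytic work of relating informational completeness to density of the subspaces.
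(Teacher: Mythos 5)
Your proof is correct and follows the same route the paper intends: the paper simply states that Theorem~\ref{prop:IC_set_of_gaussian_povms} "reduces to" this result, since $X_{\E_j}$ is the line $\R\,{\bf A}_0^j$, and you have merely written out the elementary equivalence (density of the union of lines in $\R^{2N}$ versus density of the antipodal direction set in the unit sphere) that the paper leaves implicit, including the harmless degenerate case ${\bf A}_0^j={\bf 0}$ and the sign bookkeeping. Nothing further is needed.
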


We now look at two particular examples in the case of a single mode ($N=1$).

By setting $a_1=-\sin\theta$ and $a_2 = \cos\theta$ we obtain the well known rotated quadrature operators $Q_\theta = \cos\theta\, Q + \sin\theta\, P $, the corresponding observables being denoted by $\sfq_\theta$. From Prop.~\ref{prop:one_dim_case} we immediately see that we can restrict our attention to the values $\theta\in[0,\pi)$, and that a set of the form $\{ \sfq_\theta \mid \theta\in\h I\}$ where $\h I\subset [0,\pi)$ is informationally complete if and only if $\h I$ is dense in $[0,\pi)$. This therefore sharpens the result of, e.g., \cite{Kiukas2008} by proving that density is indeed also necessary. Explicitly, the line on which the Weyl transform can be determined from the measurement of a quadrature $\sfq_\theta$, is given by
$$
X_{\sfq_\theta} = \{ (-p\sin\theta, p\cos\theta)^T \mid p\in\R  \}.
$$

As a slight modification, we set $a_1=-e^{-r}\sin\theta$ and $a_2 = e^r\cos\theta$ so that we obtain the squeezed rotated quadratures $Q_{r,\theta} = e^{r}\cos\theta \,Q + e^{-r}\sin\theta\, P $. In this case we have 
$$
X_{\sfq_{\theta, r}} = \{ (-pe^{-r}\sin\theta, pe^r\cos\theta)^T \mid p\in\R \}
$$
so that for $\theta \neq  k\frac{\pi}{2}$ any change in the value of the squeezing parameter $r$ causes a change in the slope of the line.  For instance, we can fix only two values $\theta_\pm  = \pm\frac{\pi}{4}$ and consider a set $\h I\subset \R$ of squeezing parameters. By Prop.~\ref{prop:one_dim_case}, the set $\{ \sfq_{\theta_\pm, r} \mid r\in\h I\}$ is informationally complete if and only if $$\{(\epsilon_1e^{-r},\epsilon_2e^{r})/\sqrt{2\cosh 2r}\mid r\in \h I,\epsilon_1,\epsilon_2=\pm 1\}$$ is dense in the unit circle, which happens exactly when $\h I$ is dense in $\R$. Indeed, for each of the four choices of signs, the map $\R\ni r\mapsto (\epsilon_1e^{-r},\epsilon_2e^{r})/\sqrt{2\cosh 2r}$ bijectively parametrizes the part of the unit circle lying in the interior of the $(\epsilon_1,\epsilon_2)$-quadrant. In a similar manner we can consider any finite number of values for $\theta$. The benefit of adding more rotations is that for a fixed $\theta$ one only has to find a set of  squeezing parameters which is dense in some interval $[r_-(\theta), r_+(\theta)]$ in order to guarantee informational completeness.

We could also have looked at smearings of sharp quadratures. As already mentioned, this does not add any more structure from the point of view of informational completeness. However, smearings often appear more naturally than sharp observables; we close this section with a particular case involving postprocessing.

Consider the linear postprosessing $\E_{\bf P}$ of a phase space observable $\E$ with matrix ${\bf A}_0$, given by a $1\times 2N$ matrix ${\bf P}$. Such a postprocessing is called a (nonnormalized) marginal of $\E$. Now $\E_{\bf P}$ has one-dimensional outcome space, and is a commutative Gaussian smearing of the generalized quadrature given by the vector ${\bf \tilde A}_0={\bf A}_0{\bf P}^T$. Using this equality, together with Cor. \ref{cor:IC_gaussian_povm}, we now get the following characterization of informational completeness of phase space observables in terms of quadratures.

\begin{proposition}\label{prop:info_complete_connection} Let $\E$ be a Gaussian phase space observable. Then the following are equivalent:
\begin{itemize}
\item[(i)] $\E$ is informationally complete;
\item[(ii)] There is a one-to-one correspondence between generalized quadratures $\sfq_{{\bf a}}$, ${\bf a}\in \R^{2N}$, and linear maps ${\bf P}:\R^{2N}\to \R$, such that $\E_{{\bf P}}$ is a Gaussian smearing of $\sfq_{{\bf a}}$.
\end{itemize}
\end{proposition}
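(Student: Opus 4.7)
The plan is to reduce condition (ii) to the invertibility of the matrix ${\bf A}_0$ parametrizing $\E$, and then invoke Cor.~\ref{cor:IC_gaussian_povm}. Two preparatory facts, both already established in the discussion preceding the proposition, will do the work. First, for any $1\times 2N$ row matrix ${\bf P}$ the marginal $\E_{\bf P}$ has matrix ${\bf A}_0{\bf P}^T\in\R^{2N}$; since it has one-dimensional outcome space it is automatically commutative and is a Gaussian smearing of the sharp generalized quadrature $\sfq_{{\bf A}_0{\bf P}^T}$. Second, the vector ${\bf a}\in\R^{2N}$ for which $\E_{\bf P}$ is a Gaussian smearing of $\sfq_{\bf a}$ is unique and equals ${\bf A}_0{\bf P}^T$, because the matrix of a one-dimensional Gaussian observable is determined by the observable, and the defining vector of a sharp generalized quadrature is determined by the quadrature itself (the self-adjoint generator $-{\bf a}^T{\bf \Omega}_N{\bf R}$ depends injectively on ${\bf a}$).

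With these two observations in hand, the pairing appearing in (ii) is nothing other than the assignment ${\bf P}\leftrightarrow {\bf a}={\bf A}_0{\bf P}^T$. Identifying the row matrix ${\bf P}$ with the column vector ${\bf P}^T\in \R^{2N}$, this is a linear map $\R^{2N}\to \R^{2N}$ whose matrix is precisely ${\bf A}_0$. Consequently the correspondence in (ii) is bijective if and only if ${\bf A}_0$ is invertible, i.e., $\rank {\bf A}_0=2N$, which by Cor.~\ref{cor:IC_gaussian_povm} is equivalent to the informational completeness of $\E$. This closes the chain (i)$\Leftrightarrow$(ii).

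There is no real analytical obstacle in the argument; the whole point is to recognize that both sides of the sought correspondence are naturally parametrized by $\R^{2N}$ (via ${\bf P}^T$ and via ${\bf a}$) and that the candidate pairing is the linear map ${\bf A}_0$. Accordingly the proof amounts to assembling the identifications above and quoting the earlier results, rather than performing any new computation.
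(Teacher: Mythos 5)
Your proposal is correct and follows essentially the same route the paper intends: the marginal $\E_{\bf P}$ is a Gaussian smearing of $\sfq_{{\bf A}_0{\bf P}^T}$ with the vector uniquely determined, so the correspondence in (ii) is the linear map ${\bf P}^T\mapsto{\bf A}_0{\bf P}^T$, whose bijectivity is exactly $\rank{\bf A}_0=2N$, i.e.\ informational completeness by Cor.~\ref{cor:IC_gaussian_povm}. Nothing is missing.
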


%%%%%%%%%%
\section{Measurements involving general linear bosonic channels}
%%%%%%%%%%
As we have noted before, the physical motivation for the definition of Gaussian observables comes from their measurement realizations: any Gaussian POVM can be measured by first applying a Gaussian channel to the system, and then performing homodyne detection. Now Gaussian channels are special cases of what are known as linear bosonic channels, that is, channels  $\Phi:\h T(\hil^{\otimes N}) \to\h T(\hil^{\otimes M})$ which  map Weyl operators according to 
\begin{equation*}
\Phi^*(W({\bf x})) = W({\bf Ax}) f({\bf x})
\end{equation*}
where ${\bf A}$ is a $2N\times 2M$ matrix and $f$ is a complex valued function which must again satisfy suitable complete positivity conditions \cite{Demoen1979, Holevo2001}. The corresponding generalization for Gaussian observables is then given by POVMs $\E:\h B(\R^M)\to\h L(\hil^{\otimes N})$ satisfying 
\begin{equation}\label{eqn:bosonic_povm}
\widehat{\E}({\bf p}) = W({\bf A}_0 {\bf p}) f_0({\bf p}), 
\end{equation}
again with some restrictions on ${\bf A}_0$ and $f_0$.

We immediately notice that  the only difference in the treatment of informational completeness when compared to the Gaussian case is that the function $f$ may be zero at some points. Hence, by measuring $\E$ we can always determine the values of $\widehat{\rho}$ on the set
$$
Y_\E  =  \{{\bf A}_0{\bf p} \mid {\bf p}\in\R^M,  f_0({\bf p}) \neq 0\}\subset   X_\E,
$$
where $X_\E$ is the subspace related to any Gaussian observable with the parameter ${\bf A}_0$. In particular, we have $Y_\E\subset\R^{2N}$ but this is typically not a subspace.  The proof of the following characterization of informational completeness is identical to the proof of Thm.~\ref{prop:IC_set_of_gaussian_povms} and is therefore omitted. 
\begin{theorem}
Let $\E_j$, $j\in\h I$, be observables satisfying \eqref{eqn:bosonic_povm} for some set of parameters $({\bf A}^j_0, f^j_0)$. Then $\{ \E_j\mid j\in\h I\}$ is informationally complete if and only if $\bigcup_{j\in\h I} Y_{\E_j}$ is dense in $\R^{2N}$. 
\end{theorem}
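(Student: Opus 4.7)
The plan is to mimic the proof of Theorem~\ref{prop:IC_set_of_gaussian_povms} essentially verbatim, with the only modification being that we must be careful about points where $f_0$ vanishes. The key observation is that, by linearity of the trace, for any $S\in\h T(\hil^{\otimes N})$ and any observable $\E$ of the form \eqref{eqn:bosonic_povm} we have
\begin{equation*}
\widehat{\p}^{\E}_S({\bf p}) = \tr{S\,\widehat{\E}({\bf p})} = \widehat{S}({\bf A}_0 {\bf p}) \, f_0({\bf p}),
\end{equation*}
and the set $Y_\E$ is defined precisely so that $f_0({\bf p})\neq 0$ for the ${\bf p}$ producing its elements.

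For the "if" direction, I would assume $\bigcup_{j\in\h I} Y_{\E_j}$ is dense in $\R^{2N}$. For every $j\in\h I$ and every ${\bf p}$ with $f^j_0({\bf p})\neq 0$, the value $\widehat{\rho}({\bf A}^j_0 {\bf p})$ is recovered from the statistics $\p^{\E_j}_\rho$ by dividing through by $f^j_0({\bf p})$. Thus $\widehat{\rho}$ is determined on the dense set $\bigcup_{j\in\h I} Y_{\E_j}$, and continuity of the Weyl transform then forces $\rho$ to be unique, i.e., $\{\E_j\mid j\in\h I\}$ is informationally complete.

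For the "only if" direction, I would assume $\bigcup_{j\in\h I} Y_{\E_j}$ is not dense, and pick a nonempty open $U\subset \R^{2N}$ disjoint from its closure. By Lemma~\ref{lemma:open_set}, there is a nonzero $S\in\h T(\hil^{\otimes N})$ with $\widehat{S}({\bf x})=0$ for all ${\bf x}\notin U$. For each $j\in\h I$ and ${\bf p}\in\R^{M_j}$, one of two cases occurs: either $f^j_0({\bf p})=0$, in which case $\widehat{\p}^{\E_j}_S({\bf p})=0$ trivially; or $f^j_0({\bf p})\neq 0$, in which case ${\bf A}^j_0{\bf p}\in Y_{\E_j}$ lies outside $U$, giving $\widehat{S}({\bf A}^j_0 {\bf p})=0$ and again $\widehat{\p}^{\E_j}_S({\bf p})=0$. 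By injectivity of the Fourier transform on finite complex measures, $\tr{S\,\E_j(X)}=0$ for all $j\in\h I$ and $X\in\h B(\R^{M_j})$. Since $\E_j(X)\geq 0$, the same vanishing holds for the two self-adjoint, traceless operators $S+S^*$ and $i(S-S^*)$, at least one of which is nonzero and can therefore be written as a nonzero multiple of a difference of two distinct states, contradicting informational completeness.

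The argument has no genuine obstacle; the only subtlety is bookkeeping around the zero set of $f^j_0$, which is handled by the case split above. Note that, unlike in the Gaussian situation, $Y_{\E_j}$ need not be a linear subspace, so the finite-union closedness argument of Cor.~\ref{cor:IC_finite_gaussian} does \emph{not} carry over: with general linear bosonic channels one cannot conclude that a finite informationally complete set contains a single informationally complete observable.
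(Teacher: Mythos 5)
Your proof is correct and is exactly the argument the paper intends: the paper omits the proof, stating it is identical to that of Thm.~\ref{prop:IC_set_of_gaussian_povms}, and your adaptation (dividing by $f_0^j({\bf p})$ only where it is nonzero, and splitting into the cases $f_0^j({\bf p})=0$ and $f_0^j({\bf p})\neq 0$ in the converse direction) is precisely the required bookkeeping. Your closing remark that $Y_{\E_j}$ need not be a subspace, so the finite-union argument of Cor.~\ref{cor:IC_finite_gaussian} fails here, also agrees with the paper's own observation that a pair of such observables can be informationally complete while neither is individually.
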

Note that in this more general scenario, a finite set of observables may be informationally complete even though no single observable is such. For instance, suppose that $\E$ and $\E'$ are two observables for which  ${\bf A}_0 ={\bf A}'_0$ but $f_0\neq f'_0$. Then it may happen that $Y_{\E} = \R^{2N}\setminus U$ and $Y_{\E'} = \R^{2N}\setminus U'$ for some open sets $U$ and $U'$ for which $U\cap U'=\emptyset$. In that case neither observable is informationally complete but the pair $\{ \E, \E'\}$ is since $Y_{\E} \cup Y_{\E'} = \R^{2N} $.

As an example, consider covariant phase space observables, i.e., POVMs $\E:\h B(\R^{2N})\to\h L(\hil^{\otimes N})$ satisfying
$$
W({\bf x}) \E(X) W({\bf x})^* = \E(X+{\bf x}).
$$ 
Any covariant phase space observable is generated by a unique positive trace one operator $\sigma$, giving the observable the explicit form  \cite{Holevo1979, Werner1984}
$$
\E_\sigma (X)  = \frac{1}{(2\pi)^N} \int_X W({\bf x}) \sigma W({\bf x})^*\, \de{\bf x}.
$$
In particular, if $\sigma$ is a Gaussian state, then $\E_\sigma$ is a covariant Gaussian observable. The characteristic function of $\E_\sigma$ can be calculated using Eq.~\eqref{eqn:weyl_relation} and the fact that 
$$
 \frac{1}{(2\pi)^N} \int \tr{TW({\bf x}) S W({\bf x})^*}\, \de{\bf x} = \tr{T}\tr{S}
$$
for any $T,S\in\h T(\hil^{\otimes N})$ (see, e.g., \cite[Lemma 1]{Werner1984}). As a result we obtain
$$
\widehat{\E}_\sigma ({\bf p})  = W(-{\bf \Omega}_N {\bf p}) \widehat{\sigma} ({\bf \Omega}_N {\bf p})
$$
so that compared to Eq.~\eqref{eqn:bosonic_povm} we have  ${\bf A}_0 = -{\bf \Omega}_N$, in accordance with Eq.~\eqref{eqn:gaussian_covariance}, and $f_0 ({\bf p }) = \widehat{\sigma} ({\bf \Omega}_N {\bf p})$. Using the fact that $\widehat{\sigma}(-{\bf p}) = \overline{\widehat{\sigma}({\bf p})} = 0 $ if and only if $\widehat{\sigma}({\bf p})=0 $, we find that 
$$
Y_{\E_\sigma} = \{ {\bf p}\in\R^{2N} \mid  \widehat{\sigma} ({\bf p}) \neq 0\}.
$$
In particular, 
 the covariant phase space observable $\E_\sigma$ is informationally complete if and only if ${\rm supp}\, \widehat{\sigma}=\R^{2N}$, that is, the Weyl transform of the generating operator $\sigma$ has full support. This result was obtained also in \cite{Kiukas2012}.

Notice that any covariant phase space observable can be measured with the setup depicted in Sec.~\ref{sec:covariant} and Fig.~\ref{fig:eightport} by choosing the state of the parameter field appropriately (for technical details, see \cite{KiLa2008}). This example actually demonstrates how these more general observables corresponding to linear bosonic channels arise. Indeed, if we take the  dilation of a Gaussian channel as presented in Sec.~\ref{sec:dilation} and replace the state of the auxiliary field by an arbitrary one, then we obtain a linear bosonic channel, and thus an observable,  which is typically not Gaussian but still of the form \eqref{eqn:bosonic_povm}.

Another occasion where these observables may arise is any Gaussian measurement which is subject to classical noise. Indeed, if $\E:\h B(\R^M)\to \h L(\hil^{\otimes N})$ is a Gaussian observable parametrized by $({\bf A}_0,{\bf B}_0,{\bf v}_0)$ and $\mu:\h B(\R^M)\to [0,1]$ is  a probability measure, then $\widehat{\mu *\E} = \widehat{\mu}\, \widehat{\E} $ which corresponds to Eq.~\eqref{eqn:bosonic_povm} with $f_0({\bf p}) = \widehat{\mu} ({\bf p}) e^{-\frac{1}{4} {\bf p}^T {\bf B}_0 {\bf p}-i{\bf v}_0^T {\bf p}}$. Hence, 
$$
Y_{\mu*\E} = \{ {\bf A}_0{\bf p} \mid {\bf p}\in\R^M , \widehat{\mu}({\bf p})\neq 0 \}.
$$
In particular, if $\E:\h B(\R^{2N})\to \h L(\hil^{\otimes N})$ is an informationally complete phase space observable, then $\mu*\E$ is informationally complete if and only if ${\rm supp}\, \widehat{\mu} = \R^{2N}$.

%%%%%%%%%%
\section{Conclusions}
%%%%%%%%%%
We have proved a characterization for the informational completeness of an arbitrary set of  Gaussian observables. As a consequence, we have shown that unless one has access to a single informationally complete Gaussian observable, then one needs infinitely many observables. We have characterized informationally complete Gaussian observables which are minimal in the sense of having the smallest possible dimension of the outcome space, as the observables which are bijective linear postprocessing of covariant Gaussian phase space observables. We have then developed this further and shown that any minimal informationally complete Gaussian observable is actually a postprocessing of the observable $\E^{\rm Q}$ for which the outcome distribution is the $Q$-function of the state, given a suitable symplectic coordinatization of the phase space. We have also treated commutative Gaussian observables separately, and shown that infinitely many such observables are needed in order to reach informational completeness. As a special case we have characterized informationally complete sets of generalized field quadratures, i.e., sharp Gaussian observables with one-dimensional outcome space, and proved a connection between informationally complete Gaussian phase space observables and generalized field quadratures.

Since Gaussian observables can be measured by combining Gaussian channels with homodyne detection, we have also studied the natural generalization to the case where the Gaussian channel is replaced by a linear bosonic channel. Also in this case we have obtained necessary and sufficient conditions for the informational completeness of any set of such observables.

%\
%\

\section*{Acknowledgments}
JS acknowledges support from the Academy of Finland (grant no. 138135) and the Italian Ministry of Education, University and Research (FIRB project RBFR10COAQ). JK acknowledges support from the European CHIST-ERA/BMBF project CQC, and the \mbox{EPSRC} project EP/J009776/1.

\end{document}